\DeclareMathOperator{\tr}{tr}
\DeclareMathOperator{\Prob}{Prob}
\newcommand\GHZ{{\mathrm{GHZ}}}
\newtheorem*{proposition*}{Proposition}
\begin{document}


\title{Systematic errors in current quantum state tomography tools}
\author{Christian~Schwemmer,$^{1,2}$~Lukas~Knips,$^{1,2}$~Daniel~Richart,$^{1,2}$~and~Harald~Weinfurter$^{1,2}$}
\affiliation{$^{\it 1}$Max-Planck-Institut f\"ur Quantenoptik, Hans-Kopfermann-Str. 1, D-85748 Garching, Germany}
\affiliation{$^{\it 2}$Department f\"ur Physik, Ludwig-Maximilians-Universit\"at, D-80797 M\"unchen, Germany}

\author{Tobias~Moroder,$^{3}$~Matthias~Kleinmann,$^{3}$~and~Otfried~G\"uhne,$^{3}$}
\affiliation{$^{\it 3}$Naturwissenschaftlich-Technische Fakult\"at, Universit\"at Siegen, Walter-Flex-Str.~3, D-57068 Siegen, Germany}

\begin{abstract}
Common tools for obtaining physical density matrices in experimental quantum state tomography are shown here to cause systematic errors. 
For example, using maximum likelihood or least squares optimization for state reconstruction, we observe a systematic 
underestimation of the fidelity and an overestimation of entanglement. A solution for this problem can be achieved by a linear 
evaluation of the data yielding reliable and computational simple bounds including error bars.
\end{abstract}

\pacs{03.65.Ud, 03.65.Wj, 06.20.Dk}

\maketitle

%
%
\textit{Introduction.}---Quantum state tomography (QST)~\cite{qse_book} enables us to fully determine the state of a quantum 
system and thereby to deduce all its properties. As such QST is widely used to characterize and to evaluate numerous experimentally 
implemented qubit states or their dynamics, e.g., in ion trap experiments~\cite{haeffner05a,home09}, photonic 
systems~\cite{james01a,resch05}, superconducting circuits \cite{dicarlo09}, or nuclear
magnetic resonance systems \cite{mangold04,chuang97}. The increasing complexity of today’s multiqubit/qudit quantum 
systems brought new challenges but also progress. Now, highly efficient methods allow an even scalable analysis for important subclasses of 
states~\cite{gross10a, moroder12a}. The calculation of errors of QST was significantly improved although the errors 
remain numerically expensive to evaluate for larger systems~\cite{christandl11a}.
Moreover QST was used to detect systematic errors in the alignment of an experiment itself~\cite{moroder13a}.

A central step in QST is to establish the state from the acquired experimental data. A direct, linear evaluation of the data 
returns almost for sure an unphysical density matrix with negative eigenvalues~\cite{smithey93}. 
Thus, several schemes have been developed to obtain a physical state which resembles the observed data as closely as possible 
\cite{hradil97a, james01a,blume06a}.

\begin{figure}
\includegraphics[width=\linewidth]{./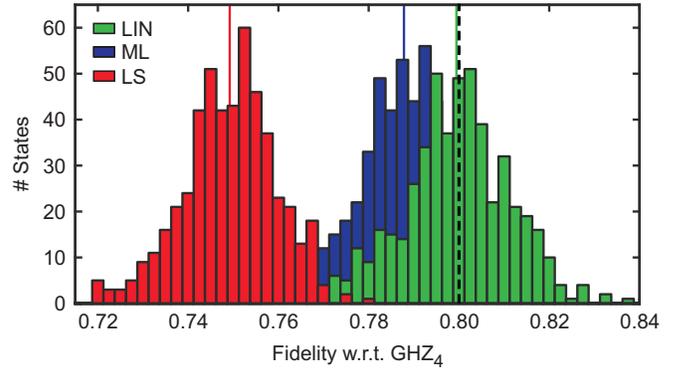}
\caption{(Color online) Histogram of the fidelity estimates of 500 independent simulations of QST of a noisy four-party Greenberger-Horne-Zeilinger (GHZ) state
for three different reconstruction schemes. The values obtained via maximum likelihood (ML, blue) or least squares (LS, red) fluctuate around a value that is lower than the initial fidelity of $80\%$ (dashed line). For comparison, we also show the result using linear inversion (LIN, green), which does not suffer from such a systematic error called bias. 
}
\label{fig:sample1}
\end{figure}

In this Letter we test whether the na\"ive expectation is met that QST delivers proper estimates for physical quantities.
We test this for the two most commonly used reconstruction schemes---maximum likelihood (ML)~\cite{hradil97a} and least squares 
(LS)~\cite{james01a}---using Monte Carlo simulations. This expectation is not fulfilled: 
both schemes return states which deviate systematically from the true state, e.g., underestimate the fidelity as shown in Fig.~\ref{fig:sample1}. For data sizes typical in multiqubit experiments the deviation from the true value is significant, in fact it is larger than commonly deduced ``error bars'' 
\cite{bootstrapping}. We show 
that the constraint of physicality necessarily leads to systematic errors for the reconstruction scheme. The size of these errors depends on the experimental noise and unavoidable statistical fluctuations. We find that it is advisable to evaluate linear operators directly on the raw data.
We also show how physical quantities that are given by convex (concave) nonlinear functions of the density matrix like the bipartite 
negativity etc., can be linearized thereby providing a meaningful lower (upper) bound, namely a directly computable error bar.

%
%
\textit{Standard state tomography tools.}---The aim of QST is to identify the initially unknown state 
$\varrho_0$ of a system via appropriate measurements on multiple preparations of this state. For an $n$-qubit system, the so-called 
Pauli tomography scheme consists of measuring in the eigenbases of all $3^n$ possible combinations of local Pauli operators, 
each yielding $2^n$ possible results \cite{james01a}. In more general terms, in a tomography protocol one repeats for each 
measurement setting~$s$ the experiment a certain number of times $N_s$ and obtains $c^s_{r}$ times the result $r$. These numbers 
then yield the frequencies $f^s_{r}=c^s_{r}/N_s$. The probability to observe the outcome $r$ for setting $s$ is given by 
$P_{\varrho_0}^s(r)=\tr(\varrho_0 M^s_{r})$. Here, $M^s_{r}$ labels the measurement operator corresponding to 
the result $r$ when measuring  setting~$s$. The probabilities $P_{\varrho_0}^s(r)$ will uniquely identify the 
unknown state $\varrho_0$, if the set of operators $M^s_{r}$ spans the space of Hermitian operators.

Provided the data $f$, i.e., the set of experimentally determined frequencies $f_r^s$ one requires a 
method to determine the estimate $\hat \varrho\equiv \hat \varrho(f)$ of the unknown state $\varrho_0$. 
Simply inverting the relations for $P_{\varrho_0}^s(r)$ we obtain
\begin{equation}
 \hat\varrho_{\rm LIN} = \sum_{r,s} A_r^s f_r^s
 \label{eq:rholin}
\end{equation}
where $A_r^s$ are determined from the measurement operators $M_r^s$ \cite{chuang97,poyatos97}.
Note that there is a canonical construction of $A_r^s$ even for the case of an overcomplete set of $M_r^s$, see SM\,1.
This reconstruction of $\hat \varrho_{\rm LIN}$ is computationally simple and has become known as linear inversion (LIN).

Yet, due to unavoidable statistical fluctuations the estimate $\hat \varrho_{\rm LIN}$ is not a physical density operator for typical experimental situations, i.e., 
generally some eigenvalues are negative. Besides the issues of a physical interpretation of such a ``state'' this causes 
further problems in evaluating interesting functions like the von Neumann entropy, the quantum Fisher information or an 
entanglement measure like the negativity as these functions are defined or meaningful only for valid, i.e., positive-semidefinite, quantum states.

For this reason, different methods have been introduced that mostly follow the paradigm that the reconstructed state 
$\hat \varrho = \arg \max\limits_{\varrho \geq 0}  T(\varrho|f)$ maximizes a target function  $T(\varrho|f)$ \emph{within} 
the set of \emph{valid} density operators. This target function thereby measures how well a density operator $\varrho$ agrees 
with the observed data $f$. Two common choices are maximum likelihood~(ML)~\cite{hradil97a} where $T_{\rm ML} = \sum_{r,s} f^s_{r} 
\log[P_\varrho^s(r)]$, and least squares~(LS)~\cite{james01a} where $T_{\rm LS} = -\sum_{r,s} [f^s_{r} - P_\varrho^s(r)]^2/
P_\varrho^s(r)$. We denote the respective solutions by $\hat \varrho_{\rm ML}$ and $\hat \varrho_{\rm LS}$. From these estimates one then easily 
computes any physical quantity of the observed state, like e.g. the fidelities $\hat{F}_{\rm ML} = \langle \psi | \hat{\varrho}_{\rm ML} 
| \psi \rangle$ and $\hat{F}_{\rm LS} = \langle \psi | \hat{\varrho}_{\rm LS} | \psi \rangle$ with respect to the target state $| \psi \rangle$.

%
%
\textit{Numerical simulations.}---To enable detailed analysis of the particular features of the respective state reconstruction 
algorithm and to exclude influence of systematic experimental errors we perform Monte Carlo simulations. 
For a chosen state $\varrho_0$ the following procedure is used:
{\it i}) Compute the single event probabilities $P^s_{\varrho_0}(r)$,
{\it ii}) toss a set of frequencies according to a multinomial distribution,
{\it iii}) reconstruct the state with either reconstruction method and compute the functions of interest,
{\it iv}) carry out steps {\it ii}) and {\it iii}) $500$ times. 
Note that the optimality of the maximizations for ML and LS in step (ii) is certified by convex optimization~\cite{moroder12a,cobook}.

Exemplarily, we first consider the four-qubit Greenberger-Horne-Zeilinger (GHZ) state $\ket{\GHZ_4}= (\ket{0000} 
+ \ket{1111})/\sqrt{2}$ mixed with white noise, i.e., $\varrho_0 = p\ket{\GHZ_4}\bra{\GHZ_4} + (1-p)\openone/16$ where $p$ is 
chosen such that the fidelity is $\braket{\GHZ_4| \varrho_0 |\GHZ_4} = 0.8$. This state is used to simulate the Pauli tomography 
scheme. Fig.~\ref{fig:sample1} shows an exemplary histogram of the resulting fidelities for $N_s=100$ measurement repetitions which 
is a typical value used for various multiqubit experiments. The fidelities obtained via LIN reconstruction fluctuate around the 
initial value ($\overline F_{\rm LIN}=0.799 \pm 0.012$). 
(The values given there are the mean and the standard deviation obtained from the 500 reconstructed states).
In stark contrast, both ML ($\overline F_{\rm ML}=0.788 \pm 0.010$) and 
even worse LS ($\overline F_{\rm LS}=0.749 \pm 0.010$) systematically underestimate the fidelity, i.e., are strongly biased.
Evidently, the fidelities of the reconstructed 
states differ by more than one standard deviation for ML and even more than five standard deviations for LS. The question arises 
how these systematic errors depend on the parameters of the simulation. Let us start by investigating the dependence on the number of 
repetitions $N_s$.
\begin{figure}
\includegraphics[width=\linewidth]{./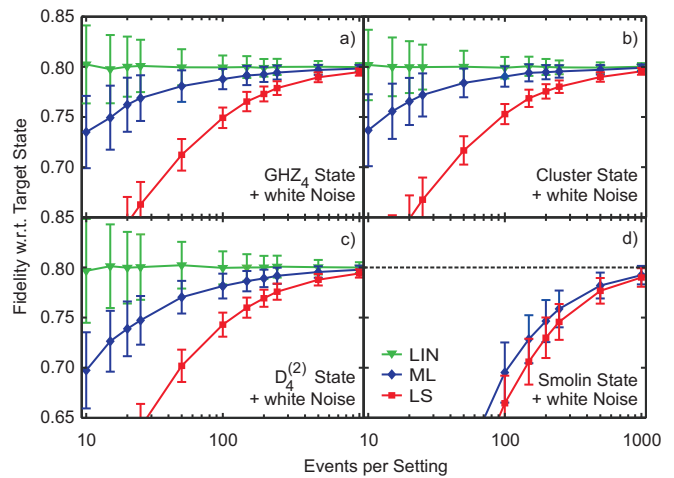}
\caption{(Color online) The performance of ML, LS, and LIN methods depending on the number of events $N_s$ per setting and for four different noisy initial 
states $\varrho_0$. Note that the fidelity can only be calculated linearly if the reference state is pure which is not the case 
for the Smolin state \cite{smolin01a}. Therefore only the curves for ML and LS are plotted for the Smolin state.}
\label{fig:sample2}
\end{figure}
Fig.~\ref{fig:sample2}a shows the mean and the standard deviations of histograms like the one shown in Fig.~\ref{fig:sample1}.
for different $N_s$. As expected, the systematic errors are more profound for low number of repetitions $N_s$ per setting 
$s$ and decrease with increasing $N_s$. Yet, even for $N_s = 500$, a number hardly used in multiqubit experiments, 
$\overline{F}_{\rm LS}$ still deviates by one standard deviation from the correct value. The effect is also by no means special 
for the GHZ state but was equally observed for other prominent four-party states, here also chosen with a true fidelity of 80\%, see 
Fig.~\ref{fig:sample2}b-\ref{fig:sample2}d and the Supplemental Material (SM).

The systematic deviations vary also with the number of qubits or the purity of the initial state. Fig.~\ref{fig:sample3}a shows 
the respective dependencies of the fidelity for $n$-qubit states $\varrho_0 = p \ket{\GHZ_n}\bra{\rm GHZ_n} + (1-p) \openone/2^n$ (for $N_s=100$).
Here, a significant increase of the bias with the number of qubits is observed especially for LS. Also when varying the purity or 
fidelity with the GHZ state, respectively, we observe  a remarkable deviation for ML and LS estimators (Fig.~\ref{fig:sample3}b). 
If the initial fidelity is very low, the effect is negligible, but large fidelity values suffer from stronger deviations, especially 
for LS. 

The commonly specified ``error bars'' used in QST quantify the statistical fluctuations of the estimate $\hat \varrho$. 
Starting either from the estimate $\hat \varrho_{\rm EST}$ (EST $\in$ $\{ \rm ML,\,LS \}$) or the observed data set $f$ this error is typically accessed by Monte Carlo sampling:
One repeatedly simulates data sets $f^{(i)}$ according to the state $\varrho_{\rm EST}$ or $f$ together with a representative noise model for the respective 
experiment and reconstructs the state $\hat \varrho(f^{(i)})$. From the resulting empirical distribution, one then 
reports the standard deviation (or a region including, say, $68\%$ of the simulated states) for the matrix elements or for quantities of interest \cite{bootstrapping}, see also SM\,3. 
%
%
However, the problem with such error bars is that they might be too small since they reflect only statistical fluctuations of the measured frequencies, but not the systematic error which easily can be larger~\cite{mood}. 

In summary, we observe systematic errors, which depend on the state reconstruction method and the strength of the statistical fluctuations 
of the count rates shown here as dependence on the number of repetitions of 
the experiment. Since the effect even depends on the unknown initial state any manual correction of the bias 
is unjustifiable. Let us emphasize that in most cases the initial value differs by more than the ``error bar'' determined via 
bootstrapping (cf. SM\,3).

\begin{figure}
\includegraphics[width=\linewidth]{./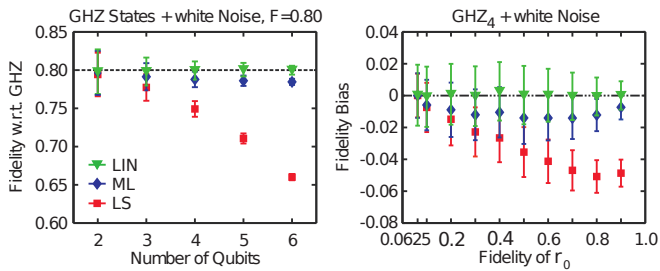}
\caption{(Color online) The behavior of ML, LS, and LIN depending on the number of qubits $n$ (left) and the fidelity of $\varrho_0$ (right).}
\label{fig:sample3}
\end{figure}

%
%
\textit{Biased and unbiased estimators.}---The systematic offset discussed above is well-known in the theory of point 
estimates~\cite{mood}. Expressed for QST, an estimator $\hat \varrho$ is called unbiased if its fluctuations are centered around 
the true mean, such that for its expectation value
\begin{equation}
\mathbb{E}_{\varrho_0}( \hat \varrho)\equiv \sum_{f} P_{\varrho_0}(f) \hat \varrho(f) = \varrho_0
\label{eq:unbiased_estimator}
\end{equation}
holds for all possible states $\varrho_0$ with $P_{\varrho_0}(f)$ the probability to observe the data $f$. An estimator that 
violates Eq.~\eqref{eq:unbiased_estimator} is called biased. Similar definitions hold for instance for fidelity estimators, 
$\mathbb{E}_{\varrho_0}(\hat F) = \braket{\psi|\varrho_0|\psi}\equiv F_0$. This terminology is motivated by the form of the mean 
squared error, which decomposes for example for the fidelity into
\begin{equation}
\mathbb{E}_{\varrho_0}[(\hat F - F_0 )^2] = \mathbb{V}_{\varrho_0} (\hat F) 
+ [ \mathbb{E}_{\varrho_0}(\hat F) - F_0]^2,
\label{eq:bias}
\end{equation}
where $\mathbb V(\hat F)\equiv\mathbb E(\hat F^2)-\mathbb E(\hat F)^2$ denotes the variance. Equation (\ref{eq:bias}) consists of 
two conceptually different parts. The first being a statistical term quantifying the fluctuations of the estimator $\hat F$ itself. 
The second, purely systematic term, is called {\it bias} and vanishes for unbiased estimators \cite{comment}. 
Note that, since the expectation values of the frequencies are the probabilities, ${\mathbbm E}_{\varrho_0}(f^s_{r})=
P^s_{\varrho_0}(r)$, and because $\hat\varrho_{\rm LIN}$ as given by Eq.~(\ref{eq:rholin}) is linear in $f^s_{r}$ the determination of a quantum state using LIN is unbiased. 
However, as shown below, for QST the bias is inherent to estimators constraint to giving only physical answers.

\begin{proposition*}
A reconstruction scheme for QST that always yields valid density operators is biased.
\end{proposition*}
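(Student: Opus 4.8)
\textit{Proof idea.}---The plan is to exploit that every pure state is an \emph{extreme point} of the convex set $\mathcal S$ of density operators, so an estimator that is confined to $\mathcal S$ cannot have a pure state as its average over data. Fix the finite sample space $\Omega$ of all frequency vectors $f$ that can occur for the given settings with $N_s$ repetitions each. From Eq.~\eqref{eq:unbiased_estimator}, $\mathbb E_{\varrho_0}(\hat\varrho)=\sum_{f\in\Omega}P_{\varrho_0}(f)\,\hat\varrho(f)$, and after deleting the zero-weight terms this is a convex combination of the density operators $\hat\varrho(f)\in\mathcal S$.

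The key step is to exhibit a pure state $\sigma$ for which \emph{every} outcome $f\in\Omega$ has strictly positive probability. For the Pauli scheme take the product state $\sigma=(\ket{\phi}\bra{\phi})^{\otimes n}$ with $\ket{\phi}=\cos\theta\,\ket{0}+\sin\theta\,\ket{1}$ and $\theta\in(0,\pi/2)\setminus\{\pi/4\}$: each eigenprojector of $\sigma_x$, $\sigma_y$ and $\sigma_z$ then overlaps $\ket{\phi}$ with a weight strictly between $0$ and $1$, so every single-event probability $P^s_\sigma(r)$ is a product of such positive numbers, and therefore every multinomial outcome $f\in\Omega$ carries positive weight. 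Now suppose, for contradiction, that $\hat\varrho$ is unbiased and always yields valid density operators. Unbiasedness at $\sigma$ gives $\sigma=\sum_{f\in\Omega}P_\sigma(f)\,\hat\varrho(f)$ with all coefficients $P_\sigma(f)>0$; since $\sigma$ is an extreme point of $\mathcal S$, a nontrivial convex combination of elements of $\mathcal S$ cannot reproduce it, so $\hat\varrho(f)=\sigma$ for \emph{all} $f\in\Omega$. Hence $\hat\varrho$ is the constant map, so $\mathbb E_{\varrho_0}(\hat\varrho)=\sigma$ for every $\varrho_0$, which contradicts unbiasedness as soon as one picks any $\varrho_0\neq\sigma$. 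Therefore no always-physical scheme is unbiased, i.e.\ it is biased, which proves the Proposition.

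The elementary ingredients---that an extreme point admits no nontrivial convex decomposition inside $\mathcal S$, and that the state $\sigma$ above really does assign positive weight to all of $\Omega$---are routine. The one point that deserves care is the finiteness (discreteness) of $\Omega$: it is precisely what makes ``$P_\sigma(f)>0$ for all $f$'' attainable and turns $\mathbb E_\sigma(\hat\varrho)$ into a genuine finite convex combination; for a continuous data model one would instead conclude $\hat\varrho=\sigma$ $P_\sigma$-almost surely by the same extremality argument. I do not anticipate a real obstacle; in particular informational completeness of the $M^s_r$ is not needed, since all that is used is that a physical, unbiased $\hat\varrho$ would be forced to return the fixed pure state $\sigma$ on data that must also be consistent with states $\varrho_0\neq\sigma$. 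This is consistent with the remark preceding the Proposition that $\hat\varrho_{\rm LIN}$, which is \emph{not} constrained to $\varrho\ge0$, is unbiased.
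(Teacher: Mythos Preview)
Your argument is correct and rests on the same key fact the paper uses: a pure state is an extreme point of the convex set of density operators, so an estimator constrained to that set can be unbiased at a pure state only if it returns that state on every data set of positive probability. The difference is in how the contradiction is then triggered. You pick a single pure product state $\sigma$ whose Bloch vector avoids all Pauli axes, so that \emph{every} frequency vector $f\in\Omega$ in the Pauli scheme has positive weight; this forces $\hat\varrho$ to be the constant map $\sigma$, and unbiasedness fails for any other $\varrho_0$. The paper instead takes two non-orthogonal pure states $\ket{\psi_1},\ket{\psi_2}$ and observes that their data sets $\mathcal S_1,\mathcal S_2$ must intersect; unbiasedness at $\ket{\psi_1}$ pins $\hat\varrho$ to $\ket{\psi_1}\bra{\psi_1}$ on the overlap $\mathcal S_{12}$, which then spoils the average at $\ket{\psi_2}$. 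Your route is cleaner in that it avoids discussing the overlap set, but as written it is tailored to the Pauli scheme; the paper's version is measurement-scheme agnostic (any scheme for which two non-orthogonal pure states share some data of positive probability, which is essentially automatic) and also generalizes, as the paper notes, to pairs of mixed states whose ranges share no common vector.
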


\begin{proof}
For a  tomography experiment on the state $\ket{\psi_i}$ with finite measurement time there is a set of possible data 
${\cal S}_i=\{f_i | P_{\ket{\psi_i}}(f_i)>0\}$, with $P_{\ket{\psi_i}}(f_i)$ the probability to obtain data $f_i$ when 
observing state $\ket{\psi_i}$. 

Consider two pure non-orthogonal states $\ket{\psi_1}$ and $\ket{\psi_2}$ ($\langle\psi_1 | \psi_2 \rangle \neq 0$). For these 
two states there exists a non-empty set of data ${\cal S}_{12}=\{f' | P_{\ket{\psi_1}}(f') \cdot P_{\ket{\psi_2}}(f')>0\}=
{\cal S}_1 \cap {\cal S}_2$, which can occur for both states. 

Now let us assume that a reconstruction scheme $\hat \varrho$ provides a valid quantum state $\hat \varrho(f)$ for all possible 
outcomes $f$ and that Eq.~(\ref{eq:unbiased_estimator}) is satisfied for $\ket{\psi_1}$, i.e., $\sum_{{\cal S}_1} P_{\ket{\psi_1}}
(f_1) \hat \varrho(f_1) = \ket{\psi_1}\bra{\psi_1}$. This incoherent sum over all $\hat \varrho(f_1)$ 
can be equal to the pure state $\ket{\psi_1}\bra{\psi_1}$ only  for the (already pathological) case that $\hat \varrho(f_1)=
\ket{\psi_1}\bra{\psi_1}$ for all $f_1 \in {\cal S}_1$. This means that the outcome of the reconstruction is fixed for all $f_1$ 
including all data $f' \in {\cal S}_{12}$. As these data also occur for state $\ket{\psi_2}$ there exist $f_2 \in {\cal S}_{12}$ with 
$\hat \varrho(f_2) = \ket{\psi_1}\bra{\psi_1} \neq \ket{\psi_2}\bra{\psi_2}$. Thus, in Eq.~(\ref{eq:unbiased_estimator}), the sum 
over all reconstructed states now is an incoherent mixture of at least two pure states and the condition 
$\sum_{{\cal S}_2} P_{\ket{\psi_2}}(f_2) \hat \varrho(f_2) = \ket{\psi_2}\bra{\psi_2}$ is violated for $\ket{\psi_2}$. Hence, 
$\hat \varrho$ does not obey Eq.~(\ref{eq:unbiased_estimator}) for $\ket{\psi_2}$ and is therefore  biased~\cite{comment_proof}.
\end{proof}

This leaves us with the trade-off: Should one necessarily use an algorithm like ML or LS to obtain a valid quantum state but suffer 
from a bias, or should one use LIN which is unbiased but typically delivers an unphysical result?

%
%
\textit{Parameter estimation by linear evaluation.}---Here, we demonstrate that starting from $\hat \varrho_{\rm LIN}$ it is straightforward to provide a
valid, lower/upper bound and an easily computable 
confidence region for many quantities of interest. For that we exploit the fact that many relevant functions are either convex, like most 
entanglement measures or the quantum Fisher information, or concave, like the von Neumann entropy. We linearize these operators 
around some properly chosen state in order to obtain a reliable lower (upper) bound. Note that typically a lower bound on an 
entanglement measure is often suited for evaluating experimental states whereas an upper bound does not give much additional 
information.

Recall that a differentiable function $g(x)$ is convex if $g(x) \geq g(x') + \nabla g(x')^T (x-x')$ holds for all $x,x'$.
In our case we are interested in a function $g(x)=g[\varrho(x)]$ where $x$ is a variable to parametrize a quantum state
$\varrho$ in a linear way. From convexity it follows that it is possible to find an operator $L$, such that
\begin{equation}
 \tr(\varrho_{0} L) \leq g(\varrho_{0})  
\end{equation}
holds for all $\varrho_0$ (similarly an upper bound is obtained for concave functions). This operator can be determined from 
the derivatives of $g(x)$ with respect to $x$ at a suitable point $x'$. For cases where the derivative is hard to compute such an 
operator can also be obtained from the Legendre transformation \cite{guehne07z} or directly inferred from the definition 
of the function $g(x)$ \cite{vidal02a}. A detailed discussion is given in the SM\,5.

For this bound a confidence region, i.e., the error bars in the frequentistic approach, can be calculated. For example a
one-sided confidence region of level $\gamma$ can be described by a function $\hat C$ on the data $f$
such that $\Prob_{\varrho_{0}}[ \hat C  \leq g(\varrho_{0}) ] 
\geq \gamma$ holds for all $\varrho_0$~\cite{mood}. According to Hoeffding's tail inequality~\cite{tomamichel12a} 
and a given decomposition of $L = \sum l^s_{r} M^s_{r}$ into the measurement operators $M^s_{r}$ a confidence region then is 
given by
\begin{equation}
\label{eq:delta_hoeffding}
\hat C = \tr(\hat \varrho_{\rm LIN}L) - \sqrt{\frac{h^2 |\log(1-\gamma)|}{2 N_s}},
\end{equation}
where $h^2$ is given by $h^2 = \sum_s (l_{\rm max}^s - l_{\rm min}^s)^2$, and $l_{{{\rm max/min}}}^s$ denotes the respective extrema 
of $l^s_{r}$ over $r$ for each setting $s$. Although not being optimal, such error bars are easy to evaluate and valid without extra 
assumptions. Since we directly compute a confidence interval on $g(x)$ this is also generally a tighter error bar than those deduced 
from a ``smallest'' confidence region on density operators which tend to drastically overestimate the error (see SM\,4 for an example).

In the following we show how to use a linearized operator on the example of the bipartite negativity~\cite{vidal02a}. 
(For the quantum Fisher information~\cite{petz10a} and additional discussion see SM\,5.)
\begin{figure}
\includegraphics[width=\linewidth]{./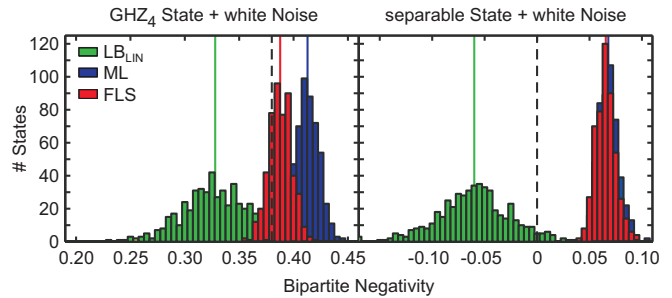}
\caption{(Color online) Lower bound ${\rm LB }_{\rm LIN}$ obtained by linearizing bipartite negativity   for a four-qubit product (left) and the GHZ state (right) both mixed with white noise resulting in $80\%$ fidelity. The ML and LS reconstruction leads to a systematic overestimation of the negativity, while the lower bound yields a 
valid estimate.}
\label{fig:negative}
\end{figure}
A lower bound on the negativity $N(\varrho_{AB})$ of a bipartite state $\varrho_{AB}$ is given by 
\begin{equation}
\label{eq:lb_negativity}
N(\varrho_{AB}) \geq \tr(\varrho_{AB}L)
\end{equation}
for any $L$ satisfying $\openone\geq L^{T_A} \geq 0$, where the superscript $T_A$ denotes partial transposition~\cite{peres96a} 
with respect to party $A$. The inequality~\eqref{eq:lb_negativity} is tight if $L$ is the projector on the negative eigenspace of 
$\varrho_{AB}^{T_A}$. Using this linear expression one can directly compute the lower bound on the negativity and by 
using Eq.~\eqref{eq:delta_hoeffding} the one-sided confidence region. 
Any choice of $L$ is in principle valid, however for a good performance $L$ should be chosen according to the experimental
situation. We assume, however, no prior knowledge and rather estimate $L$ by the projector on the negative eigenspace 
of $\hat \varrho_{\rm ML}^{T_A}$ deduced from an \emph{additional} 
tomography again with $N_s=100$ counts per setting. One can, of course, also start with an educated guess of $L$ motivated by the 
target state one wants to prepare. In any case, in order to apply Eq.~(\ref{eq:delta_hoeffding}) and to assure a linear evaluation of 
the data the operator $L$ must be chosen independently of the tomographic data \cite{moroder13a}.

Fig.~\ref{fig:negative} shows the distributions of the negativity between qubits $A=\{1,2\}$ and $B=\{3,4\}$ for the four-qubit 
GHZ state and for the separable four-qubit state $\ket{\psi_{\rm sep}} \propto  (\ket 0 + \ket +)^{\otimes 4}$, each mixed with white
noise such that the fidelity with the respective pure state is $80\%$.
In both cases we observe that ML and LS {\it overestimate} the amount of entanglement.
Even worse, if no entanglement is present, ML and LS clearly indicate entanglement. In 
contrast, the lower bound of the negativity, as given by Eq.~(\ref{eq:lb_negativity}), does not indicate false entanglement.

%
%
\textit{Conclusion.}---Any state reconstruction algorithm enforcing physicality of the result suffers from systematic 
deviations. We have shown that for the commonly used methods and the typical measurement schemes this bias 
is significant for data sizes typical in current experiments. It leads to systematically wrong statements about derived quantities 
like the fidelity or the negativity which can lead to erroneous conclusions particularly for the presence of entanglement.
Equivalent statements can be inferred for process tomography.

We have demonstrated that the simple method of linear inversion can be used to overcome these problems in many cases. 
Expectation values being linear in $\varrho$ do not exhibit a bias at all even if $\hat\varrho_{\rm LIN}$ is not physical in 
the overwhelming number of cases. A linearization of convex (concave) nonlinear physical quantities yields meaningful lower (upper) bounds together
with easy to calculate confidence intervals restoring the trust in quantum state and process tomography.

\begin{acknowledgments}
We like to thank D.\ Gross, P.\ Drummond and M.\ Raymer for stimulating discussions. This work has been supported by the 
EU (Marie Curie CIG 293993/ENFOQI and ERC QOLAPS), the BMBF (Chist-Era project QUASAR) and QCCC of the Elite Network of Bavaria.
\end{acknowledgments}


\begin{thebibliography}{49}
\expandafter\ifx\csname natexlab\endcsname\relax\def\natexlab#1{#1}\fi
\expandafter\ifx\csname bibnamefont\endcsname\relax
  \def\bibnamefont#1{#1}\fi
\expandafter\ifx\csname bibfnamefont\endcsname\relax
  \def\bibfnamefont#1{#1}\fi
\expandafter\ifx\csname citenamefont\endcsname\relax
  \def\citenamefont#1{#1}\fi
\expandafter\ifx\csname url\endcsname\relax
  \def\url#1{\texttt{#1}}\fi
\expandafter\ifx\csname urlprefix\endcsname\relax\def\urlprefix{URL }\fi
\providecommand{\bibinfo}[2]{#2}
\providecommand{\eprint}[2][]{\url{#2}}



\bibitem{qse_book}
M. G. A. Paris and J. \v{R}eh\'{a}\v{c}ek, eds., {\it Quantum state estimation} (Springer Berlin Heidelberg, 2004).


\bibitem{haeffner05a}
H. H\"affner, W. H\"ansel, C.~F. Roos, J. Benhelm, D. Chek-al-kar, M. Chwalla, T. K\"orber, U.~D. Rapol, M. Riebe, P.~O. Schmidt et~al., Nature {\bf 438}, 643 (2005).
  
  
\bibitem{home09}
J.~P. Home, D. Hanneke, J.~D. Jost, J.~M. Amini, D. Leibfried, and D.~J. Wineland, Science {\bf 325}, 1227 (2009).


\bibitem{james01a}
D.~F.~V. James, P.~G. Kwiat, W.~J. Munro, and A.~G. White, Phys. Rev. A {\bf 64}, 052312 (2001).


\bibitem{resch05}
K.~J. Resch, P. Walther, and A. Zeilinger, Phys. Rev. Lett., {\bf 94}, 070402 (2005);
N. Kiesel, C. Schmid, G. T{\'o}th, E. Solano, and H. Weinfurter, Phys. Rev. Lett. {\bf 98}, 063604 (2007);
M.~W. Mitchell, C.~W. Ellenor, S. Schneider, and A.~M. Steinberg, Phys. Rev. Lett. {\bf 91}, 120402 (2003).


\bibitem{dicarlo09}
L. DiCarlo, J.~M. Chow, J.~M. Gambetta, L.~S. Bishop, B.~R. Johnson, D.~I. Schuster, A. Majer, J.~Blais, L. Frunzio, S.~M. Girvin, and R.~J. Schoelkopf, Nature {\bf 460}, 240 (2009);
M. Neeley, R.~C. Bialczak, M. Lenander, E. Lucero, M. Mariantoni, A.~D. O'Connell, D. Sank, H. Wang, M. Weides, J. Wenner et~al., Nature {\bf 467}, 570 (2010);
A. Fedorov, L. Steffen, M. Baur, M. da~Silva, and A. Wallraff, Nature {\bf 481}, 170 (2012).


\bibitem{mangold04}
O. Mangold, A. Heidebrecht, and M. Mehring, Phys. Rev. A {\bf 70}, 042307 (2004);
M.~A. Nielsen, E. Knill, and R. Laflamme, Nature {\bf 396}, 52 (1998).
  
  
\bibitem{chuang97}
I.~L. Chuang and M.~A. Nielsen, J. Mod. Opt. {\bf 44}, 2455 (1997).



\bibitem{gross10a}
D. Gross, Y.-K. Liu, S.~T. Flammia, S. Becker, and J. Eisert, Phys. Rev. Lett. {\bf 105}, 150401 (2010);
G. T{\'o}th, W. Wieczorek, D. Gross, R. Krischek, C. Schwemmer, and H. Weinfurter, Phys. Rev. Lett. {\bf 105}, 250403 (2010);
M. Cramer, M.~B. Plenio, S.~T. Flammia, D. Gross, S.~D. Bartlett, R. Somma, O. Landon-Cardinal, Y.-K. Liu, and D. Poulin, Nat. Commun. {\bf 1}, 9 (2010);
O. Landon-Cardinal and D. Poulin, New J. Phys. {\bf 14}, 085004 (2012).


\bibitem{moroder12a}
T. Moroder, P. Hyllus, G. T{\'o}th, C. Schwemmer, A. Niggebaum, S. Gaile, O. G{\"u}hne, and H. Weinfurter, New J. Phys. {\bf 14}, 105001 (2012).


\bibitem{christandl11a}
M. Christandl and R. Renner, Phys. Rev. Lett. {\bf 109}, 120403 (2012);
T. Sugiyama, P.~S. Turner, and M. Murao, Phys. Rev. Lett. {\bf 111}, 160406 (2013);
R. Blume-Kohout, arXiv:1202.5270;
J. Shang, H.~K. Ng, A. Sehrawat, X. Li, and B.-G. Englert, New J. Phys. {\bf 15}, 123026 (2013).


\bibitem{moroder13a}
T. Moroder, M. Kleinmann, P. Schindler, T. Monz, O. G\"uhne, and R. Blatt, Phys. Rev. Lett. {\bf 110}, 180401 (2013);
D. Rosset, R. Ferretti-Sch\"obitz, J.-D. Bancal, N. Gisin, and Y.-C. Liang, Phys. Rev. A {\bf 86}, 062325 (2012);
C. Stark, arXiv:1209.5737;
N.~K. Langford, New J. Phys. {\bf 15}, 035003 (2013);
S.~J. van Enk and R. Blume-Kohout, New J. Phys. {\bf 15}, 025024 (2013).


\bibitem{smithey93}
D. Smithey, M. Beck, and M.~G. Raymer, Phys. Rev. Lett. {\bf 70}, 1244 (1993);
G.~M. D'Ariano, C. Macchiavello, and N. Sterpi, Quantum Semiclass. Opt. {\bf 9}, 929 (1997);
U. Leonhardt, M. Munroe, T. Kiss, T. Richter, and M. Raymer, Opt. Comm. {\bf 127}, 144 (1996).


\bibitem{blume06a}
R. Blume-Kohout, New J. Phys. {\bf 12}, 043034 (2010);
R. Blume-Kohout, Phys. Rev. Lett. {\bf 105}, 200504 (2010).


\bibitem{hradil97a}
Z. Hradil, Phys. Rev. A {\bf 55}, R1561 (1997).


\bibitem{bootstrapping}
B. Efron and R.~J. Tibshirani, {\it An introduction to the bootstrap} (Chapman \& Hall, 1994).


\bibitem{poyatos97}
J.~F. Poyatos and J.~I. Cirac, Phys. Rev. Lett. {\bf 78(2)}, 390 (1997);
N. Kiesel, Ph.D. thesis, Ludwig-Maximilians-Universit\"at M\"unchen (2007);
T. Kiesel, Phys. Rev. A {\bf 85}, 052114 (2012).


\bibitem{cobook}
S. Boyd and S. Vandenberghe, {\it Convex optimization} (Cambridge University Press, 2004).


\bibitem{smolin01a}
J.~A. Smolin, Phys. Rev. A {\bf 63}, 032306 (2001).


\bibitem{mood}
A.~F. Mood, {\it Introduction to the theory of statistics} (McGraw-Hill Inc., 1974).


\bibitem[{com({\natexlab{a}})}]{comment}
\bibinfo{note}{It is well-known that estimators can be biased for a finite
  number of samples or for particular additional conditions and constraints. An
  example is the estimate of the variance $\sigma^2$ of a Gaussian distribution
  with known mean $\mu$. The ML estimate from $N$ samples $x_i$ gives $\hat
  \sigma^2_{\rm ML}(x_i) = \frac{1}{N} \sum_i (x_i -\mu)^2$, while an unbiased
  estimator is $\hat \sigma^2 (x_i) = \frac{1}{N-1} \sum_i (x_i -\mu)^2$}.
  
  
\bibitem[{com({\natexlab{b}})}]{comment_proof}
\bibinfo{note}{This proof, given here for pure states, can be generalized for
  mixed states $\varrho_1$ and $\varrho_2$ with the ranges of the two states
  not sharing any vector.}  

  
  
\bibitem{guehne07z}
O. G\"uhne, M. Reimpell, and R.~F. Werner, Phys. Rev. Lett. {\bf 98}, 110502 (2007);
J. Eisert, F.~G.~S.~L. Brandao, and K.~M.~R. Audenaert, New J. Phys. {\bf 9}, 46 (2007).


\bibitem{vidal02a}
G. Vidal and R.~F. Werner, Phys. Rev. A {\bf 65}, 032314 (2002).


\bibitem{tomamichel12a}
M. Tomamichel, C.~C.~W. Lim, N. Gisin, and R. Renner, Nat. Commun. {\bf 3}, 634 (2012);
S.~T. Flammia and Y.-K. Liu, Phys. Rev. Lett. {\bf 106}, 230501 (2011);
M.~P. da~Silva, O. Landon-Cardinal, and D. Poulin, Phys. Rev. Lett. {\bf 107}, 210404 (2011).


\bibitem{petz10a}
D. Petz and C. Ghinea, QP--PQ: Quantum Probab. White Noise Anal. {\bf 27}, 261 (2011).


\bibitem{peres96a}
A. Peres, Phys. Rev. Lett. {\bf 77}, 1413 (1996).


\bibitem{horn91a}
R. A. Horn and C. R. Johnson, {\it Topics in matrix analysis} (Cambridge University Press, 1991).

\end{thebibliography}


%
%
\appendix
\section{Supplemental Material}
\section{SM\,1: Quantum state reconstruction using linear inversion}

In~\cite{james01a} it is explained how to obtain the estimate $\hat\varrho_{\rm LIN}$ for an $n$-qubit state from
the observed frequencies of a complete set of projection measurements, i.e. $4^n$ results. Yet, the scheme described there
is more general and can be used for any (over)complete set of projection measurements. 

In the standard Pauli basis $\{\sigma_0, \sigma_x, \sigma_y \sigma_z \}$ the density
matrix of the state $\varrho$ is given by
\begin{equation}
  \varrho = \frac{1}{2^n}\sum_{\mu} T_{\mu} \Gamma_{\mu}
\end{equation}
where $\mu = 1...4^n$ enumerates all possible $n$-fold tensor products of Pauli matrices $\Gamma_1 = \sigma_0 \otimes \sigma_0 \otimes ... \otimes \sigma_0$,
$\Gamma_2 = \sigma_0 \otimes \sigma_0 \otimes ... \otimes \sigma_x$, etc.
and with correlations $T_{\mu} = \tr(\varrho \Gamma_{\mu})$.
To simplify our notation we will use the following mapping for a setting $s$ with a respective outcome $r$:
$(r,s)\longrightarrow\nu = 2^{n(s-1)} + r - 1$, hence for the projectors, $M_r^s \longrightarrow M_{\nu}$, and for the $A_r^s \longrightarrow A_{\nu}$, etc.
Then the probabilities to observe a result $r$ for setting $s$, or $\nu$ respectively, are given by 
\begin{equation}
  P_{\nu} = \tr(\varrho M_{\nu}) =  \frac{1}{2^n} \sum_{\mu} \tr(M_{\nu} \Gamma_{\mu}) T_{\mu}.
  \label{eq:Pnu}
\end{equation}
Introducing the matrix $\hat B$ with elements
\begin{equation}
 B_{\nu,\mu} = \frac{1}{2^n} \tr(M_{\nu} \Gamma_{\mu})
\end{equation}
Eq. (\ref{eq:Pnu}) simplifies to
\begin{equation}
 \vec{P} = \hat B \vec{T}.
 \label{eq:Pvec}
\end{equation}
Inverting Eq. (\ref{eq:Pvec}), the correlations can be obtained from the probabilities $P_{\nu}$, i.e., $T_{\mu} = \sum_{\nu} (\hat B^{-1})_{\mu,\nu} P_{\nu}$. 
Note that this is possible for any set of measurement operators. In case of a tomographically overcomplete set, i.e. $\nu > \mu$
the inverse $\hat B^{-1}$ has to be replaced by the pseudo inverse $\hat B^{-1} \longrightarrow B^{+} = (B^{\dagger} B)^{-1} B^{\dagger}$.
Reinserting $T_{\mu}$ one obtains 
\begin{equation}
 \varrho =  \frac{1}{2^n} \sum_{\nu,\mu} (\hat B^{-1})_{\mu,\nu} \Gamma_{\mu} P_{\nu}.
\end{equation}
For finite data sets, the $P_{\nu}$ are replaced by the frequencies $f_{\nu}$ and with
\begin{equation}
A_{\nu} = \frac{1}{2^n} \sum_{\mu} (\hat B^{-1})_{\mu,\nu} \Gamma_{\mu}
\end{equation}
Eq.~(1) is obtained.

\section{SM\,2: Bias for other prominent states}

The occurrence of a bias for fidelity estimation based on ML and LS state reconstruction is by no means a special feature of the 
GHZ state. In Fig.~\ref{fig:sample6} we show some further examples of the corresponding dependencies of the bias on the number of 
measurements per setting $N_s$ for the $W$ and the fully separable state $\ket{\psi} \propto (\ket{0}+\ket{+})^{\otimes 4}$. For 
all these pure states we assume that they are mixed with white noise for an overall initial fidelity of 80\%, so that the states
are not at the border of the state space.

\begin{figure}[!h]
\includegraphics[width=\linewidth]{./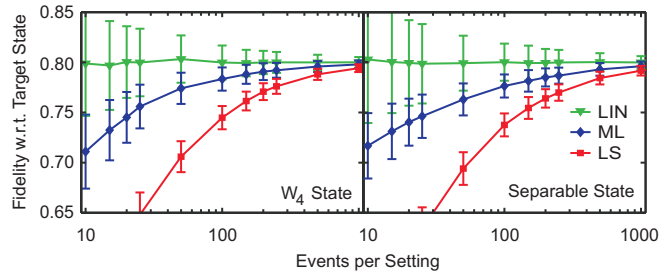}
\caption{The behavior of ML, LS and LIN depending on the number of events $N_s$ per setting for different noisy initial states 
$\varrho_0$.}
\label{fig:sample6}
\end{figure}

Furthermore we observed that the fidelity values as inferred via LS are systematically lower than those obtained using ML, see 
Fig.~\ref{fig:hist_low}.
\begin{figure}[!h]
\includegraphics[width=\linewidth]{./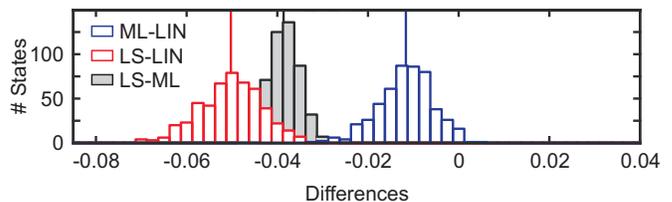}
\caption{Here we show the differences of the respective fidelity estimates evaluated for each single simulated tomography experiment as shown in Fig.~1
of the main text. 
It shows that the respective ML or LS estimate, with one rare exception, is always lower than the LIN estimate. 
Comparing ML and LS (gray) shows that not only on average but also for every single data set LS delivers a smaller fidelity value than ML.
}
\label{fig:hist_low} 
\end{figure}

\section{SM\,3: Bootstrapping}

As already mentioned in the main text, in many publications where QST is performed the standard error bar is calculated by bootstrapping
based on Monte Carlo methods. One can here
distinguish between parametric bootstrapping, where $f^{(i)}$ are sampled according to $\hat P^s(r) = \tr(\hat \varrho(f_{\rm obs}) 
M^s_{r})$, and non-parametric bootstrapping, where $\hat P^s(r) = f_{\rm obs}$ is used instead. 

We consider again the four-qubit GHZ state of $80\%$ fidelity and $N_s=100$. 
Interpreting the simulations of Fig.~1 as Monte-Carlo simulations from the parametric bootstrap with $\hat P^s(r) = \tr(\varrho_0 M^s_{r})$ we have 
already seen that ML and LS yield fidelity estimates below the actual value.
If one uses now one of these data sets $f_{\rm obs}$ 
as a seed to generate new samples $f^{(i)}$ the fidelity decreases further. As shown in Fig.~\ref{fig:sample4} this happens in 
particular for parametric bootstrapping ($0.777 \pm 0.011$ for ML and $0.700 \pm 0.012$ for LS) while non-parametric bootstrapping 
($0.780 \pm 0.011$ for ML and $0.714 \pm 0.012$ for LS) weakens this effect. 
However, in this context, one is interested in fact in the standard deviation of the simulated distribution. In our 
simulations it is somewhat smaller than the distribution of linearly evaluated fidelities. This means, the biasedness of ML and LS 
methods leads to a false estimate of the error, too.

\begin{figure}[!b]
\includegraphics[width=\linewidth]{./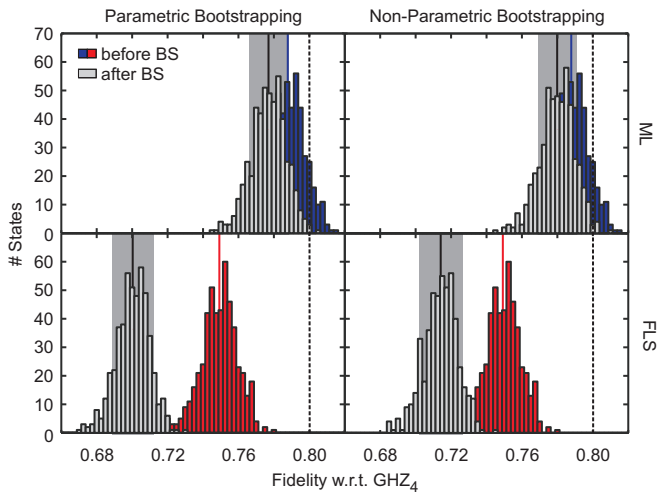}
\caption{Error bar computation for the fidelity of the four-qubit GHZ state via Monte-Carlo simulation using either parametric or 
non-parametric bootstrapping with the data from Fig.~1. For each of these $500$ observations $f_{\rm obs}$, $100$ 
new data sets $f^{(i)}$ were generated and reconstructed in order to deduce the mean and standard deviation as an error bar for the 
fidelity. The histograms denoted by ``after BS'' show the distributions of these means together with an averaged error bar given by 
the gray shaded areas. The initial values for the fidelities are described by the dashed lines.}
\label{fig:sample4}
\end{figure}

\section{SM\,4: Confidence regions for states vs.~scalar quantities}
 
Let us now comment on confidence regions (CR) for density operators and CR on parameter functions $Q$. Having a (tractable) method 
to compute CR for states $\hat C_{\varrho}(f)$~\cite{christandl11a}, one could think that this region of states also 
provides good CR for the parameter functions $Q$, if one manages to evaluate the minimal and maximal values of $Q(\varrho)$ for all 
$\varrho \in \hat C_\varrho(f)$. However, such CR are typically much worse than CR evaluated for $Q$ 
directly, the reason being the large freedom in how to build up a CR. Let us give 
the following illustrative example, see also Fig.~\ref{fig:cr}:

\begin{figure}
\includegraphics[width=\linewidth]{./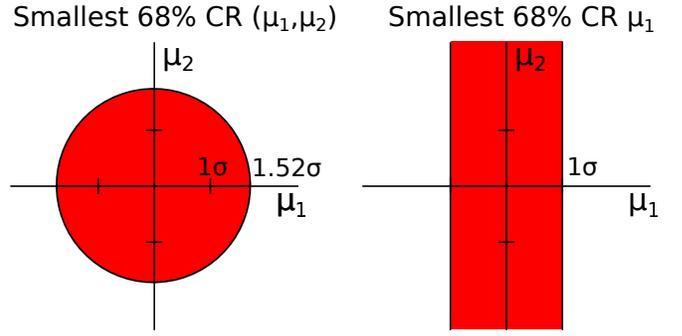}
\caption{Which confidence region is the smallest? If one is interested in both mean values $\vec \mu = (\mu_1,\mu_2)$ then clearly 
the left one represents the smallest one, but if $Q(\vec \mu)=\mu_1$ is chosen, then the right one is much better than the projected 
left one.}
\label{fig:cr}
\end{figure}

Let us consider the task to obtain a CR for the two mean values $\vec \mu = (\mu_1, \mu_2)$ of two independent Gaussian experiments, 
where the first $N$ samples $x_i$ are drawn from $\mathcal{N}(\mu_1, \sigma^2)$ while the remaining $N$ instances $y_i$ originate 
from $\mathcal{N}(\mu_2, \sigma^2)$, both with the same known variances. If one is interested in an $68\%$ CR for both mean values 
$\vec \mu$ then both possible recipes 
\begin{align}
\hat C^{(1)} &= \{ \vec \mu \: : \: \| \vec \mu - (\bar x, \bar y) \| \leq 1.52 \sigma/\sqrt{N}\}, \\
\hat C^{(2)} &= [\bar x - \sigma/\sqrt{N}, \bar x + \sigma/\sqrt{N}] \times (-\infty, \infty)
\end{align}
with $\bar x = \frac{1}{N}\sum_i x_i$ and similar for $\bar y$ are valid $68\%$ CR. However, while $\hat C^{(1)}$ 
yields the smallest area for the CR, it gives a much larger confidence region for $Q(\vec \mu) = \mu_1$ than if we would directly use 
$\hat C^{(2)}$, which in fact is the smallest one for $\mu_1$. Note that this effect increases roughly with $\sqrt{\rm dim}$ if one 
adds further parameters in the considered Gaussian example. Therefore we see that ``error bars'' associated with CR on the density 
operator are not the best choice if one is interested only on a few key properties of the 
state.
 
\section{SM\,5: Bounds on convex/concave functions}

As mentioned in the main text, one can directly bound convex (or concave) functions $g(x)$ by linear ones using an operator $L$
\begin{equation}
\tr(\varrho_0 L) \leq g(\varrho_0).
\label{eq:operator}
\end{equation}
Here, we want to explain in detail how the operator $L$ can be determined from the derivatives  of $g[\varrho(x)]$. Therefore, we 
parametrize the density operator $\varrho(x)=\openone/\dim + \sum_i x_i S_i$ via an orthonormal basis $S_i$ of Hermitian traceless 
operators. A possible choice for the $S_i$ are all normalized traceless tensor products of the Pauli matrices and 
the identity. Since we employ an affine parametrization, the function $g(x)=g[\varrho(x)]$ is convex. Direct calculation shows that 
choosing the operator $L[\varrho(x')]=l_0\openone+\sum_i l_i S_i$ as
\begin{eqnarray}
l_0 &=& g[\varrho_{\rm guess}(x')]-\sum_i x'_i \frac{\partial}{\partial x_{i}}g[\varrho_{\rm guess}(x')]\\
l_i &=& \frac{\partial}{\partial x_i} g[\varrho_{\rm guess}(x')]
\end{eqnarray}
gives due to the convexity condition $g(x) \geq g(x') + \nabla g(x')^T (x-x')$
a lower bound as in Eq.~(\ref{eq:operator}). Here, $L[\varrho(x')]$ is computed on a ``guess'' $x'$, 
i.e., $\varrho_{\rm guess}(x')$ of the true state $\varrho_0$. Recall that while the guess $\varrho_{\rm guess}$ must be a valid 
state the lower bound $\tr(\varrho_0 L)$ is well-defined also for nonphysical density operators.

\begin{figure}
\includegraphics[width=\linewidth]{./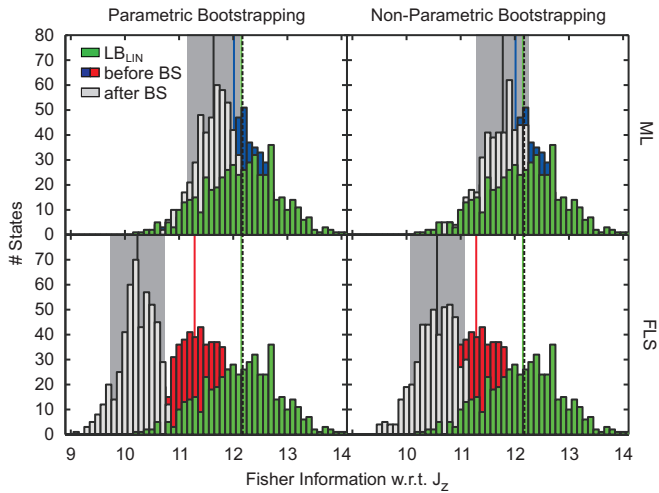}
\caption{Full analysis of a Pauli QST scheme with $N_s=100$ on four qubits in order to deduce the quantum Fisher information with 
respect to $H=J_z$. As the true underlying state we assume again a noisy four-party GHZ state. We observe that the quantum Fisher 
information is underestimated from both ML and LS, while the lower bound deduced from LIN is fine. }
\label{fig:sample5}
\end{figure}

As an example how to apply this linearization, let us consider the quantum Fisher information $f(x) = F(\varrho,H)$, which measures 
the suitability of a state $\varrho$ to determine the parameter $\theta$ in an evolution $U(\theta,H)=e^{-i\theta H}$. More 
explicitly the formulae are given by
\begin{align}
f(x) &= 2\sum_{jk}\frac{(\lambda_j-\lambda_k)^2}{\lambda_j+\lambda_k} 
H_{jk}H_{kj}, \\
\frac{\partial}{\partial x_i} f(x) & = 4 \sum_{jkl} \frac{\lambda_j \lambda_k + \lambda_j \lambda_l + \lambda_k \lambda_l -3 
\lambda_j^2}{(\lambda_j+\lambda_k)(\lambda_j+\lambda_l)} H_{jk}S_{i, kl}H_{lj}
\end{align} 
where $\{\lambda_i, \ket{\psi_i}\}$ denotes the eigenspectrum of $\varrho(x)$, $H_{jk}=\braket{\psi_j|H|\psi_k}$ and 
$S_{i,kl}=\braket{\psi_k|S_i|\psi_l}$. In order to compute the derivative of the Fisher information one can employ the alternative 
form, as given for instance in Ref.~\cite{petz10a},
\begin{eqnarray}
F(\varrho, H)&=&\tr[(H\varrho^2 + \varrho^2 H -2\varrho H \varrho) J^{-1}_{\varrho}(H)],\\
J^{-1}_{\varrho}(H)&=&\int_{0}^\infty dt\:\: e^{-t/2 \varrho} H e^{-t/2 \varrho}.
\end{eqnarray}
such that the derivative can be computed via the help of matrix derivatives~\cite{horn91a}.

Now let us imagine that we want to determine the quantum Fisher information of a four-qubit state with respect to $H=J_z$, while our 
true underlying state $\varrho_0$ is once more the noisy GHZ state of $80\%$ fidelity. Figure~\ref{fig:sample5} shows the full 
simulation of a Pauli tomography experiment with $N_s=100$ together with the standard error analysis using parametric or 
non-parametric bootstrapping. As with the other examples, we observe a systematic discrepancy between the results of standard QST 
tools and the true value. In this case, though the quantum Fisher information is typically larger for stronger entangled states, ML or LS 
underestimate the true capabilities of the state. However, if we use the described method for LIN (with an in this case optimized 
operator $L$) the lower bound via LIN is fine.

For completeness, we also give the respective derivatives for further convex functions of interest like the purity 
$g(x) = \tr(\varrho^2)$ 
\begin{equation}
\frac{\partial}{\partial x_i} g(x) = 2 \tr[S_i \varrho(x)]
\end{equation}
and correspondingly for the von Neumann entropy $g(x) = -\tr(\varrho \log \varrho)$ 
\begin{equation}
\frac{\partial}{\partial x_i} g(x) = -\tr[S_i(\log \varrho(x) -\mathbbm{1})].
\end{equation}

\end{document}